\documentclass[conference]{IEEEtran}
\IEEEoverridecommandlockouts

\usepackage[utf8]{inputenc} 
\usepackage[T1]{fontenc}
\usepackage{url}
\usepackage{amsmath}
\usepackage{cuted}
\allowdisplaybreaks
\usepackage{amsmath,amssymb,amsfonts}
\usepackage{algorithmicx}
\usepackage{algorithm,algpseudocode}
\usepackage{algcompatible}
\usepackage{array}
\usepackage[caption=false,font=normalsize,labelfont=sf,textfont=sf]{subfig}
\usepackage{textcomp}

\usepackage{url}
\usepackage{verbatim}
\usepackage{graphicx}
\usepackage{cuted}
\usepackage{dblfloatfix} 
\usepackage{mathrsfs}
\usepackage{caption}
\usepackage{cite}
\usepackage{bm}
\usepackage{xcolor}
\usepackage{colortbl}
\usepackage{multirow}
\usepackage{bbm}
\usepackage[figuresright]{rotating}
\allowdisplaybreaks

\newtheorem{definition}{Definition}
\newtheorem{theorem}{Theorem}
\newtheorem{lemma}{Lemma}

\newtheorem{remark}{Remark}
\newtheorem{corollary}{Corollary}
\def\BibTeX{{\rm B\kern-.05em{\sc i\kern-.025em b}\kern-.08em
    T\kern-.1667em\lower.7ex\hbox{E}\kern-.125emX}}

\begin{document}
\title{Straggler-Aware Coded Polynomial Aggregation} 
\author{
\IEEEauthorblockN{Xi Zhong\textsuperscript{1}, Jörg Kliewer\textsuperscript{2} and Mingyue Ji\textsuperscript{1}}

\IEEEauthorblockA{\textit{\textsuperscript{1}Department of Electrical and Computer Engineering}, \textit{University of Florida}, Gainesville, FL, USA\\
Email: \{xi.zhong, mingyueji\}@ufl.edu }

\IEEEauthorblockA{\textit{\textsuperscript{2}Department of Electrical and Computer Engineering}, \textit{New Jersey Institute of Technology}, Newark, NJ, USA\\
Email: jkliewer@njit.edu }}
\maketitle

\begin{abstract}
Coded polynomial aggregation (CPA) in distributed computing systems enables the master to directly recover a weighted aggregation of polynomial computations without individually decoding each term, thereby reducing the number of required worker responses.
However, existing CPA schemes are restricted to an idealized setting in which the system cannot tolerate stragglers.
In this paper, we extend CPA to straggler-aware distributed computing systems with a pre-specified non-straggler pattern, where exact recovery is required for a given collection of admissible non-straggler sets.
Our main results show that exact recovery of the desired aggregation is achievable with fewer worker responses than that required by polynomial codes based on individual decoding, and that feasibility is characterized by the intersection structure of the non-straggler patterns.
In particular, we establish necessary and sufficient conditions for exact recovery in straggler-aware CPA.
We identify an intersection-size threshold that is sufficient to guarantee exact recovery.
When the number of admissible non-straggler sets is sufficiently large, we further show that this threshold is necessary in a generic sense.
We also provide an explicit construction of feasible CPA schemes whenever the intersection size exceeds the derived threshold.
Finally, simulations verify our theoretical results by demonstrating a sharp feasibility transition at the predicted intersection threshold.
\end{abstract}

\section{Introduction}
Distributed computing enables large-scale data processing by decomposing a computation across multiple worker nodes and aggregating their responses at a master node. Coding techniques have emerged as powerful tools to improve the reliability and efficiency of distributed computing systems, particularly in the presence of \emph{stragglers}, which fail to return responses within a reasonable time.

For matrix multiplication and polynomial computation tasks, a large body of prior work applies polynomial-based coding techniques to achieve \emph{exact recovery}, including maximum distance separable (MDS) coded computing \cite{8002642,8006963}, polynomial codes \cite{yu2018polynomialcodesoptimaldesign}, MatDot and PolyDot codes \cite{8765375}, entangled polynomial codes \cite{9174167,8949560}, and their extensions to numerically stable and heterogeneous settings \cite{8849468,9322629}. 
Lagrange coded computing \cite{yu2019lagrangecodedcomputingoptimal} further reduces decoding complexity by encoding sub-computations as evaluations of Lagrange polynomials.
Beyond exact recovery, several works have also studied approximated coded computing for general functions under numerical or probabilistic guarantees \cite{9713954, moradi2024codedcomputingresilientdistributed, 11195640}.

A common feature of most existing coded computing schemes is that they rely on
\emph{individual decoding}, where the master decodes all individual sub-computations before computing the desired result.
Moreover, these schemes typically impose a recovery threshold and are designed
to tolerate \emph{all} straggler sets of a given size.
Once the number of non-straggler workers exceeds the recovery threshold, exact
recovery is guaranteed regardless of which workers return their results.

While robustness to arbitrary straggler sets is sufficient to guarantee exact recovery, it may not be required in practice. In many distributed systems, straggler behavior is not completely arbitrary, and certain non-straggler sets occur with much higher frequency than others. 
This observation motivates relaxing worst-case straggler robustness in coded computing by exploiting statistical or structural regularities in straggler behavior. For example, the authors in \cite{11195640} considered probabilistic straggler models and established recovery guarantees with high probability.
Moreover, for computation tasks involving aggregation, recovering every individual sub-computation via individual decoding introduces unnecessary redundancy. This observation has motivated prior works that target aggregated outputs rather than individual computations.
Gradient coding~\cite{9488815} focuses on recovering the sum of gradients by introducing redundancy across workers.
Another work studies linearly separable computations~\cite{9614153}, where the objective is to exactly recover weighted aggregations of arbitrary computations.
In our parallel work~\cite{XJM2026CPA}, we proposed coded polynomial aggregation (CPA), where the goal is to compute a weighted sum of polynomial computations. By exploiting the aggregation structure and the algebraic properties of polynomials, we characterized the number of worker responses required for exact recovery without individual decoding.

However, the analysis in~\cite{XJM2026CPA} is restricted to an idealized setting in which all workers are assumed to respond, i.e., without straggler tolerance. In practice, stragglers are unavoidable, and the subset of workers that return their results may vary over time.
Moreover, requiring exact recovery under \emph{all} possible straggler sets may be overly conservative, as practical systems often exhibit regularities, where certain workers are more reliable than others.

In this paper, we extend CPA to \emph{straggler-aware} distributed computing systems with a pre-specified non-straggler pattern, where exact recovery is required for a given collection of admissible non-straggler sets. We show that exact recovery of the desired aggregation is achievable with fewer worker responses than that required by polynomial codes based on individual decoding, and that feasibility is characterized by the intersection structure of the pre-specified non-straggler pattern. 

The main contributions are summarized as follows.
\begin{enumerate}
\item We establish necessary and sufficient conditions for exact recovery in straggler-aware CPA.
\item We identify a threshold that guarantees exact recovery on the intersection size of the pre-specified non-straggler pattern. 
Moreover, when the number of admissible non-straggler sets is sufficiently large, we show that this threshold is necessary in a generic sense.
\item We provide explicit CPA schemes that achieve exact recovery, whenever the intersection size of the non-straggler pattern exceeds the derived threshold.
\item Simulations demonstrate a sharp feasibility transition at the predicted intersection threshold, corroborating the theoretical results.
\end{enumerate}

\section{Problem Formulation}
\subsection{CPA over a Pre-Specified Non-Straggler Pattern}
\label{subsec:sa-cpa}
We consider a distributed computing system consisting of a master and a set of $N$ workers, indexed by $[N] =\{0$, $1$, $\ldots$, $N-1\}$. 
Given $K$ data matrices $\boldsymbol{X}_k \in \mathbb{C}^{q \times v}$ for $k \in [K]$, a polynomial function $F(\cdot)$ of degree $d$ that operates element-wise on each data matrix, and a weight vector $\boldsymbol{w} \in \mathbb{C}^{K}$ with $w_k \neq 0$ for $k \in [K]$, the
objective of the system is to compute the weighted aggregation, 
\begin{equation}
\label{eq-goal}
\boldsymbol{Y}
\triangleq
\sum_{k=0}^{K-1} w_k\, F(\boldsymbol{X}_k),
\end{equation}
using the responses from a set of $N-S$ non-straggler workers from a pre-specified non-straggler pattern. Specifically, rather than enforcing recovery for every subset of $N-S$ non-straggler workers, the system is assumed to know a collection of admissible non-straggler sets a priori, and is designed to achieve exact recovery for each such set, which is a subset of $[N]$ with cardinality $N-S$.

The definition of a non-straggler pattern is as follows.
\begin{definition}[Non-Straggler Pattern]
Given positive integers $N$ and $S$, a \emph{non-straggler pattern} is
defined as $\boldsymbol{\mathcal{N}} \triangleq \{\mathcal{N}_g : g \in [G]\}$,
where each $\mathcal{N}_g \subseteq [N]$ is a \emph{non-straggler set} satisfying $|\mathcal{N}_g| = N-S$.
We  define the intersection $\mathcal{I} \triangleq \bigcap_{g \in [G]} \mathcal{N}_g$
and its cardinality $I \triangleq |\mathcal{I}|$.
\hfill$\diamond$
\end{definition}
We further define $L \triangleq N-S-I$.
The parameter $G$ denotes the number of non-straggler sets in the pattern.
For example, $G=1$ corresponds to a single designated non-straggler set, whereas $G=\binom{N}{N-S}$ corresponds to all non-straggler sets.

A CPA scheme over a pre-specified non-straggler pattern $\boldsymbol{\mathcal{N}}$ consists of the following three phases.
\subsubsection{Encoding}
The master selects a set of $K$ distinct \emph{data points} $\{\alpha_k \in \mathbb{C} : k \in [K]\}$, and interpolates an encoder polynomial $E(z)$ such that $E(\alpha_k) =$ $ \boldsymbol{X}_k$ for all $k \in [K]$.
Next, the master selects a set of $N$ distinct \emph{evaluation points} 
$\{\beta_n$ $\in \mathbb{C} :$ $n \in [N]\}$ satisfying
$\{\alpha_k :$ $k \in [K]\}$ $\cap$ $\{\beta_n :$ $n \in [N]\}$ $=$ $\emptyset$.
The master evaluates $E(z)$ at $\{\beta_n: n \in [N]\}$ and sends the coded matrix
$E(\beta_n)$ to worker $n$.

\subsubsection{Computing}
Each worker $n \in [N]$ computes $F(E(\beta_n))$ locally and returns the result to
the master.

\subsubsection{Decoding}
Upon receiving responses from a set of $N-S$ workers
$\mathcal{N} \in \boldsymbol{\mathcal{N}}$, the master interpolates a decoder polynomial $D(z)$  such that
$D(\beta_n) = F(E(\beta_n))$ for all $n \in \mathcal{N}$. The master then evaluates $D(z)$ at the data points
$\{\alpha_k : k \in [K]\}$ and obtains
\begin{equation}
\label{eq-decoded-output}
\widehat{\boldsymbol{Y}}(\mathcal{N})
\triangleq
\sum_{k=0}^{K-1} w_k\, D(\alpha_k).
\end{equation}

We define the feasibility of a CPA scheme over a pre-specified non-straggler pattern $\boldsymbol{\mathcal{N}}$  as follows.
\begin{definition}[Feasibility over a Non-Straggler Pattern]
\label{def-feasible-G}
Fix positive integers $K$, $d$, $S$ and $N$, data points $\{\alpha_k :$  $ k \in [K]\}$ and
evaluation points $\{\beta_n :$ $n \in [N]\}$, and a pre-specified non-straggler pattern $\boldsymbol{\mathcal{N}}$.
A CPA scheme is  \emph{feasible over $\boldsymbol{\mathcal{N}}$}
if  $\widehat{\boldsymbol{Y}}(\mathcal{N}) = \boldsymbol{Y}$
for all $\mathcal{N} \in \boldsymbol{\mathcal{N}}$.
\hfill$\diamond$
\end{definition}

In this paper, we treat the data points $\{\alpha_k : k \in [K]\}$ as fixed system parameters\footnotemark.
\footnotetext{Allowing joint design of the data points $\{\alpha_k : k \in [K]\}$ and the evaluation points $\{\beta_n : n \in [N]\}$ may further enlarge the feasible design space of CPA schemes.}
We assume that they are pairwise distinct, i.e., $\alpha_i \neq \alpha_j$ for all $i \neq j$, and generic\footnotemark in the sense that they are chosen outside a proper algebraic variety determined by the system parameters $K$, $N$, $d$, $S$ and $\boldsymbol{w}$.
\footnotetext{For background on the notion of genericity and algebraic varieties, we refer the reader to \cite{Shafarevich1995}.}

\subsection{Individual Decoding Baseline}
Existing results on polynomial codes~\cite{yu2018polynomialcodesoptimaldesign, 8765375, 9174167, 8949560, 8849468, 9322629, yu2019lagrangecodedcomputingoptimal} recover the desired computation by decoding all individual sub-computations via polynomial interpolation.
Among these works, \emph{Lagrange coded computing}~\cite{yu2019lagrangecodedcomputingoptimal}
serves as a natural baseline for the CPA setting, as it applies to polynomial computation tasks by encoding each data matrix
$\boldsymbol{X}_k$ as an evaluation of an encoder polynomial $E(z)$. 
When applied to the CPA setting, Lagrange coded computing leads to the following decoding strategy.
\begin{definition}[CPA Based on Individual Decoding]
A \emph{CPA scheme based on individual decoding} operates as follows.
Upon receiving responses from a non-straggler set of workers, the master reconstructs all individual evaluations
$F(\boldsymbol{X}_k)$ by interpolating a polynomial $D(z)$ satisfying
$D(\alpha_k)=F(\boldsymbol{X}_k)$ for all $k \in [K]$, and then computes the desired aggregation.
\end{definition}

The following lemma characterizes the minimum number of workers required for
feasibility of CPA schemes based on individual decoding, under \emph{arbitrary}
straggler patterns.
\begin{lemma}
\label{le-local}
For integers $K$, $d$, $S$, and $N$, a CPA scheme based on individual decoding is
feasible under \emph{arbitrary} non-straggler patterns if and only if
$N \ge d(K-1)+S+1$.
\end{lemma}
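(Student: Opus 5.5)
The plan is to reduce both directions to a degree count for the composite polynomial $F(E(z))$. The encoder $E(z)$ interpolates $K$ points $\{\alpha_k\}$, so in the Lagrange construction it has degree $K-1$. Since $F$ has degree $d$ and acts element-wise, each entry of $F(E(z))$ is a univariate polynomial of degree at most $d(K-1)$. For generic data matrices this degree is exactly $d(K-1)$: the leading coefficient of $F(E(z))$ is a nonzero polynomial expression in the entries of the $\boldsymbol{X}_k$, so it is nonzero outside a proper variety.

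\textbf{Sufficiency.} Assume $N \ge d(K-1)+S+1$. Then every non-straggler set $\mathcal{N}$ of size $N-S$ contains at least $d(K-1)+1$ distinct evaluation points $\beta_n$. Interpolation on these points recovers the degree-$d(K-1)$ polynomial $F(E(z))$ uniquely, entry by entry. Hence $D(z)=F(E(z))$. Evaluating at the data points gives $D(\alpha_k)=F(E(\alpha_k))=F(\boldsymbol{X}_k)$ for every $k$, and the weighted sum $\sum_k w_k D(\alpha_k)$ equals $\boldsymbol{Y}$ no matter which straggler set occurs.

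\textbf{Necessity.} This is the main obstacle, and it splits into two parts.

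\emph{Part 1: too few responses cannot work.} Suppose $N-S \le d(K-1)$. The master sees only $N-S$ values of a polynomial of degree $d(K-1)$. I would exhibit two inputs that agree at those points but differ at the $\alpha_k$. Concretely, pick a polynomial $P(z)$ of degree $\le d(K-1)$ that vanishes on all $\{\beta_n:n\in\mathcal{N}\}$ but has $P(\alpha_{k_0})\neq 0$ for some $k_0$. Such a $P$ exists because $N-S \le d(K-1)$ and $\alpha_{k_0}\notin\{\beta_n\}$.

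\emph{Part 2: $P$ must be realizable.} It remains to show that $P$ arises as a difference $F(E(z))-F(E'(z))$ for admissible inputs, or at least that the map from data to the individual evaluations is not determined by the responses. I would argue by dimension. In the scalar case $q=v=1$ with $F(x)=x^d$, the map $(\boldsymbol{X}_0,\dots,\boldsymbol{X}_{K-1}) \mapsto (F(E(\beta_n)))_{n\in\mathcal{N}}$ goes from $\mathbb{C}^K$ to $\mathbb{C}^{N-S}$. The individual outputs $(F(\boldsymbol{X}_k))_k$ must factor through the received values. I expect the cleanest route is instead to interpret ``individual decoding'' as the definition requires: the master must interpolate $D$ with $D(\alpha_k)=F(\boldsymbol{X}_k)$ by treating $F(E(z))$ as an unknown polynomial of degree $d(K-1)$. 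Interpolating a generic polynomial of that degree requires $d(K-1)+1$ evaluations. Then the free coefficient direction $P$ above shows non-uniqueness, because generic $F\circ E$ span enough of the polynomial space.

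\textbf{Conclusion.} Since the guarantee must hold for arbitrary non-straggler sets, including a worst-case set of size $N-S$, we need $N-S\ge d(K-1)+1$, that is, $N\ge d(K-1)+S+1$.
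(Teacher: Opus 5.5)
Your argument is correct and is essentially the paper's proof. Both rest on $\deg F(E(z))\le d(K-1)$ and on uniqueness of interpolation from $N-S\ge d(K-1)+1$ points for sufficiency; for necessity, both use the fact that individual decoding treats $F(E(z))$ as an unknown polynomial of degree $d(K-1)$, so when $N-S\le d(K-1)$ the responses cannot distinguish it from $F(E(z))+\prod_{n\in\mathcal{N}}(z-\beta_n)$. You can drop the dimension-count detour in Part~2, since it gives nothing once $N-S\ge K$: in that range necessity genuinely depends on the decoder being interpolation, i.e.\ linear in the responses. Your remark that the polynomials $F(E(z))$ span all of $\mathbb{C}[z]_{\le d(K-1)}$ is exactly what makes the necessity rigorous for any such linear decoder.
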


\begin{IEEEproof}
Under individual decoding, the master interpolates the polynomial $D(z) = F(E(z))$.
Since $\deg(E) \le K-1$ and $F(z)$ has degree $d$, we have $\deg(F(E)) \le d(K-1)$.
Hence, interpolating $F(E(z))$ requires at least $d(K-1)+1$ distinct responses.
To ensure feasibility under arbitrary straggler patterns, the polynomial $F(E(z))$ must be uniquely interpolated from the responses of any non-straggler set of size $N-S$. This requires $N-S \ge d(K-1)+1$.
Conversely, if $N-S \ge d(K-1)+1$, then the responses from any such non-straggler
set provide at least $d(K-1)+1$ distinct evaluations, which uniquely
determines $D(z) = F(E(z))$ and hence all individual values $D(\alpha_k)=F(E(\alpha_k))=F(\boldsymbol{X}_k)$.
\end{IEEEproof}

From Lemma~\ref{le-local}, for a CPA scheme based on individual decoding, guaranteeing exact recovery under arbitrary straggler patterns requires the number of workers to satisfy $N \ge d(K-1)+S+1$.

In this paper, we study CPA under a pre-specified non-straggler pattern $\boldsymbol{\mathcal{N}}$ in the regime $N \le d(K-1)+S$, where exact recovery under \emph{arbitrary} straggler patterns via individual decoding is infeasible.
We show that exact recovery can be achieved by directly exploiting the aggregation structure.

\section{Main Results}
\subsection{Necessary and Sufficient Conditions for Feasibility of CPA over a Pre-Specified Non-Straggler Pattern}
Rather than relying on individual decoding, we directly study the resulting
recovery error $\widehat{\boldsymbol{Y}}(\mathcal{N}_g)-\boldsymbol{Y}$ for
$\mathcal{N}_g \in \boldsymbol{\mathcal{N}}$ in the regime
$N \le d(K-1)+S$.
\begin{theorem}
\label{th-condition}
For positive integers $K$, $d$, $S$, and $N$ satisfying
$S+2 \leq N \leq d(K-1)+S$, let
$C \triangleq d(K-1)+S+1-N$.
For a given pre-specified non-straggler pattern $\boldsymbol{\mathcal{N}}$, a CPA scheme is feasible over $\boldsymbol{\mathcal{N}}$
\emph{if and only if}
the data points $\{\alpha_k \in \mathbb{C} :$ $k \in [K]\}$ and the evaluation points $\{\beta_n \in \mathbb{C} :$ $n \in [N]\}$
satisfy $\{\alpha_k :$ $k \in [K]\}$ $\cap$ $\{\beta_n :$ $n \in [N]\}$ $= \emptyset$
and
\begin{equation}
\label{eq-equations-partial}
\sum_{k=0}^{K-1} w_k\, P_g(\alpha_k)\, \alpha_k^j = 0,
\forall\, j \in [C],\ \forall\, g \in [G],
\end{equation}
where $P_g(z) \triangleq \prod_{n \in \mathcal{N}_g} (z-\beta_n)$, $\mathcal{N}_g \in \boldsymbol{\mathcal{N}}$.
\end{theorem}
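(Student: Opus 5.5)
The plan is to fix a single non-straggler set $\mathcal{N}_g$ and compute the recovery error $\widehat{\boldsymbol{Y}}(\mathcal{N}_g)-\boldsymbol{Y}$ in closed form. Write $H(z)\triangleq F(E(z))$, which has degree at most $d(K-1)$ (entrywise, with matrix coefficients). The decoder $D_g(z)$ is the unique polynomial of degree at most $N-S-1$ interpolating $H$ at $\{\beta_n:n\in\mathcal{N}_g\}$. By polynomial division by the monic $P_g$ of degree $N-S$,
\begin{equation*}
H(z)=Q_g(z)P_g(z)+D_g(z), \qquad \deg Q_g\le d(K-1)-(N-S)=C-1 .
\end{equation*}
Since $H(\alpha_k)=F(\boldsymbol{X}_k)$, the error becomes
\begin{equation*}
\widehat{\boldsymbol{Y}}(\mathcal{N}_g)-\boldsymbol{Y}=-\sum_{k=0}^{K-1}w_k P_g(\alpha_k)Q_g(\alpha_k)=-\sum_{j=0}^{C-1}\boldsymbol{Q}_{g,j}\sum_{k}w_kP_g(\alpha_k)\alpha_k^j ,
\end{equation*}
where $\boldsymbol{Q}_{g,j}$ are the matrix coefficients of $Q_g$. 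The hypothesis $N\le d(K-1)+S$ makes $C\ge1$, and $N\ge S+2$ gives $|\mathcal{N}_g|\ge 2$, so the setup is nondegenerate.

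\textbf{Sufficiency.} If \eqref{eq-equations-partial} holds for all $j\in[C]$ and $g\in[G]$, each inner sum vanishes and the error is zero for every $\mathcal{N}_g$. The disjointness requirement is part of the scheme definition, needed so that $E(\beta_n)$ are genuine coded evaluations and the interpolation is well posed.

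\textbf{Necessity.} Feasibility must hold for all data $\boldsymbol{X}_k$ (and the given $F$ of degree exactly $d$). The plan is to show the coefficient vector $(\boldsymbol{Q}_{g,0},\dots,\boldsymbol{Q}_{g,C-1})$ can be made to range over enough values that vanishing of the error forces each $\sum_k w_kP_g(\alpha_k)\alpha_k^j=0$. Concretely, work entrywise (scalars) and note that $Q_g$ is the quotient of $H$ by $P_g$. The map $H\mapsto Q_g$ is onto polynomials of degree at most $C-1$ when $H$ ranges over all polynomials of degree at most $d(K-1)$: take $H=z^{N-S+j}+\text{lower}$, whose quotient has leading term $z^j$, giving a triangular basis.

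\textbf{Main obstacle.} $H$ is constrained to the form $F(E(z))$ with $\deg E\le K-1$, so one must show these achievable $H$ span the full space of degree at most $d(K-1)$. The first approach is to use that $E$ ranges over all polynomials of degree at most $K-1$, since the $\boldsymbol{X}_k$ are arbitrary and the $\alpha_k$ distinct. The leading monomial of $F$, $c_dx^d$ with $c_d\ne0$, then makes $F(E)$ contain $c_dE^d$. Polarization over different choices of $E$ (e.g.\ $E=\lambda_1 e_1+\lambda_2 e_2$, extracting coefficients in the $\lambda$'s, with lower-degree parts of $F$ peeled off by homogeneity in scaling $E\mapsto tE$) shows that the span of $\{F(E)\}$ contains all products $e_1\cdots e_d$ of polynomials of degree at most $K-1$. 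This includes every monomial $z^m$ with $m\le d(K-1)$. Linearity of the error in $H$ then gives, for each $j\in[C]$, a choice with $Q_g=z^j$ plus lower terms. Inducting on $j$ forces $\sum_k w_kP_g(\alpha_k)\alpha_k^j=0$ for all $j$ and all $g$, completing the equivalence.
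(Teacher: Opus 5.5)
Your proof is correct and follows essentially the same route as the paper: your quotient $Q_g$ is, up to sign, the paper's factor $R(z)$ in $\Delta(z)=D(z)-F(E(z))=P_g(z)R(z)$, and both arguments write the recovery error as a linear combination of the sums $\sum_k w_kP_g(\alpha_k)\alpha_k^j$, $j\in[C]$. The one place you go further is necessity: the paper simply asserts that $R$ has arbitrary coefficients, while your scaling-plus-polarization argument actually proves that $\operatorname{span}\{F(E)\}$ contains all polynomials of degree at most $d(K-1)$, so the quotient really does range over all polynomials of degree at most $C-1$.
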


\begin{IEEEproof}
The proof is provided in Appendix~\ref{sec-appen-1}.
\end{IEEEproof}
Theorem~\ref{th-condition} shows that a CPA scheme is
feasible over $\boldsymbol{\mathcal{N}}$ if and only if the data and evaluation points
satisfy a system of orthogonality conditions, as given in \eqref{eq-equations-partial}.
In particular, each non-straggler set $\mathcal{N}_g$ induces $C$ orthogonality conditions associated with the polynomial $P_g(z)$.

The intuition behind the proposed conditions in \eqref{eq-equations-partial} is as follows.
The resulting recovery error $\widehat{\boldsymbol{Y}}(\mathcal{N}_{g}) - \boldsymbol{Y}$ can be expressed as a linear combination of 
$\sum_{k=0}^{K-1} w_k P_{g}(\alpha_k)\alpha_k^j$ for $j \in [C]$. When the orthogonality conditions in \eqref{eq-equations-partial} are satisfied,
all such quantities are equal to zero, which eliminates the recovery error and enables exact recovery of $\boldsymbol{Y}$.

\begin{remark}
In Theorem~\ref{th-condition}, the parameter $C$ captures the difference between the number of workers required by individual decoding in Lemma~\ref{le-local} ($d(K-1)+S+1$) and the number of workers required when only the aggregation is recovered ($N$).
\end{remark}

\subsection{A Sufficient Condition Based on the Intersection Structure}
From Theorem~\ref{th-condition}, for fixed data points $\{\alpha_k : k \in [K]\}$ and a given weight vector $\boldsymbol{w}$,
designing a feasible CPA scheme over a $\boldsymbol{\mathcal{N}}$ reduces to selecting evaluation points
$\{\beta_n : n \in [N]\}$ that simultaneously satisfy all orthogonality conditions. 

It can be seen that \eqref{eq-equations-partial} exhibits a common algebraic structure
induced by the intersection of non-straggler sets. Specifically, consider the intersection
$\mathcal{I} \triangleq \bigcap_{g \in [G]} \mathcal{N}_{g}$ and define
$P_{\mathcal{I}}(z) \triangleq \prod_{n \in \mathcal{I}} (z-\beta_n)$.
Since $\mathcal{I} \subseteq \mathcal{N}_g$ for all $g \in [G]$,
the polynomial $P_{\mathcal{I}}(z)$ is a common factor of all
$P_g(z), g \in [G]$. 
Consequently, each orthogonality condition in~\eqref{eq-equations-partial}
can be factorized with respect to $P_{\mathcal{I}}(z)$. This factorization allows
all orthogonality conditions associated with different non-straggler sets
to be \emph{simultaneously enforced} through a reduced set of conditions
that depend only on $P_{\mathcal{I}}(z)$.
Hence, we obtain the sufficient condition stated in Lemma~\ref{le-sufficient}.

\begin{lemma}
\label{le-sufficient}
For positive integers $K$, $d$, $S$, and $N$ satisfying
$S+2$ $\leq N$ $\leq$ $d(K-1)+S$, and a pre-specified non-straggler pattern
$\boldsymbol{\mathcal{N}}$, suppose that
\begin{equation}
\label{eq-equations-partial-equivalent}
\sum_{k=0}^{K-1} w_k\, P_{\mathcal I}(\alpha_k)\, \alpha_k^{j} = 0,\ \  j \in [C+L],
\end{equation}
 where $P_{\mathcal{I}}(z) \triangleq 
\prod_{n \in \mathcal{I}} (z-\beta_n)$.
Then the orthogonality conditions in~\eqref{eq-equations-partial} are satisfied.
\end{lemma}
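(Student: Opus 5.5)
Since $\mathcal{I}\subseteq\mathcal{N}_g$ for every $g\in[G]$, each $P_g(z)$ factors as
\begin{equation*}
P_g(z)=P_{\mathcal I}(z)\,Q_g(z), \qquad Q_g(z)\triangleq\prod_{n\in\mathcal{N}_g\setminus\mathcal{I}}(z-\beta_n).
\end{equation*}
The plan is to show that every left-hand side of~\eqref{eq-equations-partial} is a linear combination of the left-hand sides of~\eqref{eq-equations-partial-equivalent}. Since $|\mathcal{N}_g|=N-S$ and $|\mathcal{I}|=I$, the cofactor $Q_g$ is a monic polynomial of degree exactly $N-S-I=L$. We expand it as $Q_g(z)=\sum_{\ell=0}^{L} c_{g,\ell}\, z^{\ell}$, where the coefficients $c_{g,\ell}\in\mathbb{C}$ depend only on the evaluation points indexed by $\mathcal{N}_g\setminus\mathcal{I}$.

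Next, fix $g\in[G]$ and $j\in[C]$. Substituting the factorization and exchanging the finite sums gives
\begin{equation*}
\sum_{k=0}^{K-1} w_k P_g(\alpha_k)\alpha_k^{j}
=\sum_{\ell=0}^{L} c_{g,\ell}\sum_{k=0}^{K-1} w_k P_{\mathcal I}(\alpha_k)\alpha_k^{j+\ell}.
\end{equation*}
The only point needing care is the index range. Here $j\in[C]=\{0,\ldots,C-1\}$ and $\ell\in\{0,\ldots,L\}$, so $j+\ell\le C+L-1$, i.e., $j+\ell\in[C+L]$. Each inner sum is therefore one of the quantities assumed to vanish in~\eqref{eq-equations-partial-equivalent}. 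The whole expression is zero for every $j\in[C]$ and $g\in[G]$, which is exactly~\eqref{eq-equations-partial}.

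This step is routine bookkeeping rather than a real obstacle. The argument uses no genericity and does not depend on the particular pattern beyond the inclusion $\mathcal{I}\subseteq\mathcal{N}_g$. The hypothesis $S+2\le N\le d(K-1)+S$ only ensures that $C\ge1$ and that the setting of Theorem~\ref{th-condition} applies. If one also assumes the data and evaluation points are disjoint, Theorem~\ref{th-condition} then immediately yields feasibility over $\boldsymbol{\mathcal{N}}$.
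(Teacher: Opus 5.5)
Your proposal is correct and matches the paper's proof: factor $P_g=Q_gP_{\mathcal I}$ with $\deg Q_g=L$, expand $Q_g$ in monomials, and exchange the sums so that each condition in~\eqref{eq-equations-partial} is a linear combination of the quantities in~\eqref{eq-equations-partial-equivalent}. You also spell out the check that $j+\ell\in[C+L]$, which the paper leaves implicit.
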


\begin{IEEEproof}
The proof is provided in Appendix~\ref{appendix-sufficient}.
\end{IEEEproof}
From Lemma~\ref{le-sufficient}, enforcing the orthogonality conditions
in~\eqref{eq-equations-partial} reduces to satisfying
\eqref{eq-equations-partial-equivalent} with respect to the common evaluation
points $\{\beta_n : n \in \mathcal{I}\}$.
Hence, for a fixed set of data points $\{\alpha_k : k \in [K]\}$,
designing a feasible CPA scheme reduces to finding evaluation points
$\{\beta_n : n \in \mathcal{I}\}$.
This reduction motivates the following sufficient lower bound on the intersection
size $I$ for the existence of evaluation points.
\begin{theorem}
\label{th-bound-case1}
For positive integers $K$, $d$, $S$, and $N$ satisfying
$S+2 \le N \le d(K-1)+S$, a pre-specified non-straggler pattern
$\boldsymbol{\mathcal{N}}$, and a fixed generic pairwise distinct set of data
points $\{\alpha_k \in \mathbb{C} : k \in [K]\}$, the following statements hold.
\begin{enumerate}
    \item There exists a choice of $\{\beta_n: n\in \mathcal{I}\}$ such that \eqref{eq-equations-partial-equivalent} holds and
    $\{\beta_n: n\in \mathcal{I}\} \cap \{\alpha_k:k\in [K]\} = \emptyset$ if and only if
    \begin{equation}
    \label{eq-Idk}
    I \ge I^*, \ 
    I^* =
    \begin{cases}
    \left\lfloor \dfrac{K-1}{2} \right\rfloor + 1, & \text{if } d = 1, \\[6pt]
    (d-1)(K-1) + 1, & \text{if } d \ge 2.
    \end{cases}
    \end{equation}
    \item If $I \ge I^*$, then there exists a choice of  $\{\beta_n : n \in [N]\}$ satisfying \eqref{eq-equations-partial} and $\{\alpha_k : k \in [K]\} \cap \{\beta_n : n \in [N]\} = \emptyset$. Hence, the CPA scheme is feasible over $\boldsymbol{\mathcal{N}}$.
\end{enumerate}
\end{theorem}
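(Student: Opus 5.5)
The plan is to recast \eqref{eq-equations-partial-equivalent} as a linear-algebra condition on the coefficient vector of the monic polynomial $P_{\mathcal I}$, and then count dimensions. Set $M \triangleq C+L = d(K-1)+1-I$. Condition \eqref{eq-equations-partial-equivalent} says that the vector $\boldsymbol{u}\in\mathbb{C}^K$ with $u_k = w_k P_{\mathcal I}(\alpha_k)$ is orthogonal, in the bilinear sense $\sum_k u_k Q(\alpha_k)=0$, to every polynomial $Q$ with $\deg Q\le M-1$. Let $V_M\subseteq\mathbb{C}^K$ denote the space of such vectors. I will first show that feasibility is equivalent to the two inequalities $M\le K-1$ and $M\le I$. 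I will then check that these reduce to \eqref{eq-Idk}.

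\emph{Necessity.} First, if $M\ge K$, the $M\times K$ Vandermonde matrix $[\alpha_k^j]$ has full column rank $K$, because the $\alpha_k$ are distinct. Hence $V_M=\{0\}$, which forces $P_{\mathcal I}(\alpha_k)=0$ for some (indeed every) $k$. This contradicts $\{\beta_n:n\in\mathcal I\}\cap\{\alpha_k\}=\emptyset$, so $M\le K-1$.

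Second, assume $M\le K-1$, so that $\dim V_M = K-M$. Consider the $(I+1)$-dimensional subspace $W_I=\{(w_kR(\alpha_k))_k : \deg R\le I\}$, which has this dimension since $I\le N-S\le K$ in the relevant regime, or else by truncating. A valid $P_{\mathcal I}$ yields a nonzero vector in $W_I\cap V_M$. For generic $\{\alpha_k\}$, with $\boldsymbol w$ fixed and all $w_k\ne0$, two subspaces with $(I+1)+(K-M)\le K$ intersect only in $0$. To justify this I would exhibit one configuration of $\alpha$'s at which the relevant $(I+1)\times M$ matrix $\big[\sum_k w_k\alpha_k^{i+j}\big]$ has full rank $I+1$. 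Full rank is a nonvanishing minor, so it then holds off a proper variety. Thus $M\ge I+1$ is infeasible, giving $M\le I$.

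\emph{Reduction to \eqref{eq-Idk}.} For $d\ge2$, the inequality $M\le K-1$ reads $I\ge (d-1)(K-1)+1$. This already implies $M\le I$, since $2(d-1)(K-1)+2\ge d(K-1)+1$ is equivalent to $(d-2)(K-1)+1\ge0$. For $d=1$ we have $M=K-I$, so $M\le K-1$ is automatic, and $M\le I$ means $I\ge \lceil K/2\rceil=\lfloor (K-1)/2\rfloor+1$.

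\emph{Sufficiency.} Writing $P_{\mathcal I}(z)=z^I+\sum_{i<I}c_iz^i$, the conditions become $M$ linear equations in the $I$ unknowns $c_i$. The equations are $\sum_{i<I} c_i H_{i+j} = -H_{I+j}$ with $H_t=\sum_k w_k\alpha_k^t$, a Hankel-type system. For generic $\alpha$, the $M\times I$ coefficient matrix has full row rank $M$ when $M\le I$, again by a single-witness argument for a nonvanishing minor. The affine solution set is therefore nonempty, of dimension $I-M$. It remains to pick a solution whose roots are pairwise distinct and avoid every $\alpha_k$. The conditions ``$P(\alpha_k)=0$'' and ``$\mathrm{disc}(P)=0$'' cut out algebraic subsets of this affine space (as does the full parameter space including the $\alpha$'s), so it suffices that they are proper, i.e.\ fail at one point. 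I expect this to be the main obstacle, particularly when $I=M$ and the solution is unique. Then one must argue that for generic $\alpha$ the unique Hankel solution has simple roots outside $\{\alpha_k\}$. The first thing I would try is to construct explicit witness configurations, e.g.\ choose $P$ first and solve for compatible $\alpha$'s, or take a limit where some $\alpha_k$ collide or go to infinity, so as to show that the bad loci are proper subvarieties of the $\alpha$-space.

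\emph{Part 2.} Given $\{\beta_n:n\in\mathcal I\}$ from part 1, choose the remaining $\beta_n$, $n\notin\mathcal I$, arbitrarily, distinct from each other, from the $\beta_n$ with $n\in\mathcal I$, and from all $\alpha_k$. By Lemma~\ref{le-sufficient}, \eqref{eq-equations-partial} holds, and Theorem~\ref{th-condition} then yields feasibility over $\boldsymbol{\mathcal N}$.
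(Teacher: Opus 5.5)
Your dimension count holds up as far as it goes. It covers necessity of $M=C+L\le K-1$ (no genericity needed), generic necessity of $M\le I$ for $d=1$, and the reduction of these two inequalities to $I\ge I^*$. Part~2 matches the paper. The single-witness rank claims you promise can be supplied by Cauchy--Binet: $\det[H_{i+j}]_{i,j\in[r]}=\sum_{|S|=r}\prod_{k\in S}w_k\prod_{i<j\in S}(\alpha_i-\alpha_j)^2$. For $2\le r\le K$ the monomial $\prod_{k\in S}\alpha_k^{r-1}$ occurs only in the $S$-term, with coefficient $\pm r!\prod_{k\in S}w_k\neq0$, so this is a nonzero polynomial.

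\textbf{The gap.} It sits exactly at the step you flag, in sufficiency. Statement~1 needs $\beta_n$, $n\in\mathcal I$, that are pairwise distinct and avoid $\{\alpha_k\}$. In other words, you need a solution $P_{\mathcal I}$ of the linear system that is squarefree with $P_{\mathcal I}(\alpha_k)\neq0$ for all $k$. Knowing that the affine solution set is nonempty gives none of this. A priori that set could lie entirely in the discriminant locus, or in some hyperplane $\{P(\alpha_k)=0\}$. When $I=M$ (which happens for $d=1$, $K$ even, $I=K/2$), the set is a single point determined by $\alpha$ and the fixed, arbitrary $\boldsymbol w$. To finish, you must show for every nonzero $\boldsymbol w$ that these bad loci are proper in $\alpha$-space. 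Your proposal only lists ideas for doing so. The paper does not do this from scratch either. It imports it from Theorem~2 of \cite{XJM2026CPA} via $C\to C+L$, $N\to I$, $P\to P_{\mathcal I}$, and that result is precisely the existence of distinct roots outside $\{\alpha_k\}$.

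\textbf{Closing it for $d\ge2$.} Here the gap closes without any genericity, because $I\ge I^*\ge K$ and $M\le K-1$.
\begin{enumerate}
\item Let $\Pi(z)=\prod_k(z-\alpha_k)$ and $u_k=1/\Pi'(\alpha_k)$. Then $\sum_k u_k\alpha_k^j=0$ for $j\le K-2$, so $u\in V_{K-1}\subseteq V_M$, and no $u_k$ vanishes.
\item Let $A$ be the interpolant of degree $<K$ of the values $u_k/w_k$.
\item Let $T$ be monic of degree $I-K$ with simple roots off $\{\alpha_k\}$, and set $P=\Pi T+\lambda A$.
\end{enumerate}
Then $P$ is monic of degree $I$ and $(w_kP(\alpha_k))_k=\lambda u\in V_M$. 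For $\lambda\neq0$ we get $P(\alpha_k)=\lambda u_k/w_k\neq0$. For small $\lambda$, $P$ is squarefree because $\Pi T$ is.

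\textbf{What remains for $d=1$.} Here $\lceil K/2\rceil\le I\le K-1<K$, so this interpolation freedom is unavailable. A genuine genericity argument, or the cited result, is still required.
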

\begin{IEEEproof}
    The proof is sketched in Appendix \ref{sec-sppen-2}.
\end{IEEEproof}

The first statement in Theorem~\ref{th-bound-case1} characterizes a necessary and sufficient condition on the intersection size $I$, for the existence of evaluation points $\{\beta_n : n \in \mathcal{I}\}$ that satisfy the reduced orthogonality conditions in Lemma~\ref{le-sufficient}.
The second statement shows that the condition $I\geq I^*$ is sufficient to guarantee the existence of evaluation points $\{\beta_n : n \in [N]\}$ satisfying the original orthogonality conditions in~\eqref{eq-equations-partial}, and hence ensures feasibility of the CPA scheme over a non-straggler pattern $\boldsymbol{\mathcal{N}}$.
We refer to $I^*$ as the \emph{sufficient threshold}.

\begin{figure*}
\centering
\includegraphics[width=0.90\textwidth]{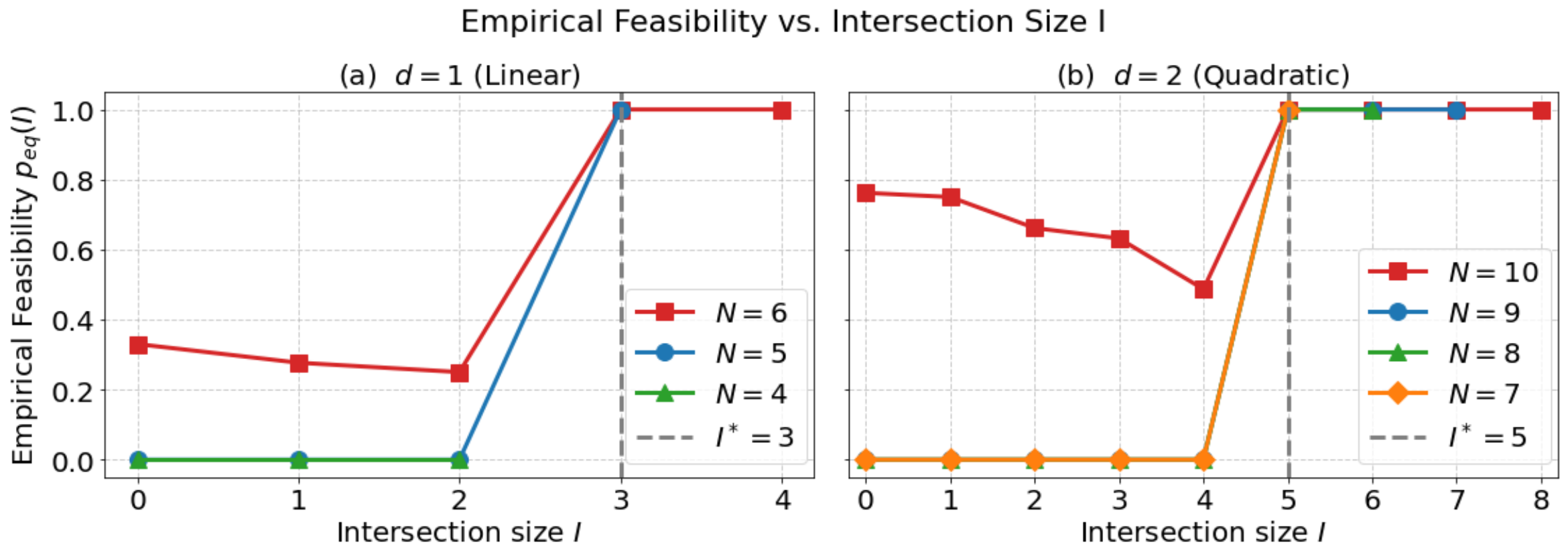}
\caption{Empirical feasibility $p_{eq}(I)$ versus the intersection size $I$ for $K=5$, $S=2$ and $G_{\text{max}} = 7$.
The vertical gray dashed line indicates the sufficient threshold $I^*$ in
Theorem~\ref{th-bound-case1}.
For $d=1$, we consider $N=6,5,4$, corresponding to maximum intersection sizes
$I=4,3,2$, respectively; when $I \le 2$, the curves for $N=5$ and $N=4$ overlap.
For $d=2$, we consider $N=10,9,8,7$, corresponding to maximum intersection sizes
$I=8,7,6,5$, respectively; when $I \le 5$, the curves for $N=9,8,7$ overlap.}
\label{fig:feasibility}
\end{figure*}

The following corollary shows that the sufficient threshold $I^*$ becomes generically necessary when the number of non-straggler sets is sufficiently large.
\begin{corollary}
\label{coro-ne-su-if}
Fix a generic pairwise distinct set of data points $\{\alpha_k: k \in [K]\}$.
Suppose that $G \ge L+1$.
For almost all choices of distinct evaluation points $\{\beta_n:n\in[N]\}$,
i.e., except for a set of Lebesgue measure zero\footnotemark\footnotetext{
The Lebesgue measure-zero set corresponds to a proper algebraic variety.
See Appendix~\ref{appendix-if-sufficient} for details.},
the reduced orthogonality conditions in \eqref{eq-equations-partial-equivalent}
are equivalent to the original orthogonality conditions in~\eqref{eq-equations-partial}.
Consequently, under this regime, the condition $I \ge I^*$ in~\eqref{eq-Idk}
is generically necessary and sufficient for the feasibility of the CPA scheme.
\end{corollary}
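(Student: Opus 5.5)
We need to show that when $G\ge L+1$, for generic $\{\beta_n\}$, the original conditions \eqref{eq-equations-partial} imply the reduced conditions \eqref{eq-equations-partial-equivalent}. The converse direction is Lemma~\ref{le-sufficient}. The plan is to phrase everything as linear functionals on polynomials and then use a dimension count.

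Define the linear functional $\Phi(Q)\triangleq\sum_{k} w_k P_{\mathcal I}(\alpha_k) Q(\alpha_k)$ on polynomials $Q$. Then \eqref{eq-equations-partial-equivalent} says that $\Phi$ vanishes on $\mathcal{V}\triangleq\{Q:\deg Q\le C+L-1\}$, a space of dimension $C+L$. For each $g$, write $P_g=P_{\mathcal I}R_g$, where $R_g(z)\triangleq\prod_{n\in\mathcal N_g\setminus\mathcal I}(z-\beta_n)$ has degree exactly $L$. Condition \eqref{eq-equations-partial} for set $g$ then says that $\Phi$ vanishes on $\mathcal{W}_g\triangleq R_g\cdot\{\text{polys of degree}\le C-1\}\subseteq\mathcal V$. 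Hence \eqref{eq-equations-partial} is equivalent to $\Phi$ vanishing on $\mathcal W\triangleq\sum_g \mathcal W_g\subseteq\mathcal V$. The equivalence of the two systems therefore follows once $\mathcal W=\mathcal V$, i.e., once the ideal-type span $\sum_g R_g\cdot\mathbb C[z]_{<C}$ fills all polynomials of degree $<C+L$.

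\textbf{Key step.} We prove $\sum_{g} R_g\,\mathbb C[z]_{<C}=\mathbb C[z]_{<C+L}$ for generic $\beta$ when $G\ge L+1$. This is a linear-algebra statement: the spanning set $\{z^jR_g\}$ has $GC$ vectors, and surjectivity onto the $(C+L)$-dimensional space amounts to a maximal minor of a coefficient matrix being nonzero. That minor is a polynomial in the $\beta_n$. Genericity then reduces to exhibiting one choice of distinct $\beta$'s, compatible with the given pattern, where surjectivity holds; once it holds there, it fails only on a proper algebraic variety, which has measure zero.

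\textbf{Main obstacle.} The hard part is establishing surjectivity in the structured case, since the $R_g$ are constrained by the pattern. The sets $\mathcal N_g\setminus\mathcal I$ are distinct $L$-subsets of $[N]\setminus\mathcal I$, so for distinct $\beta$ the $R_g$ are pairwise distinct monic degree-$L$ polynomials. For $G=2$, if $\gcd(R_0,R_1)=H$ has degree $h$, the span of $R_0\mathbb C[z]_{<C}+R_1\mathbb C[z]_{<C}$ is $H\cdot\mathbb C[z]_{<C+L-h}$ when $C\ge L-h$ (Bezout/Sylvester-type argument). Adding further $R_g$, we induct: maintain that the span is $H_t\cdot\mathbb C[z]_{<C+L-\deg H_t}$, where $H_t=\gcd(R_0,\dots,R_t)$. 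Each new, distinct $R_{t+1}$ ensures that some root of $H_t$ is not a root of $R_{t+1}$, so $\deg H_{t+1}<\deg H_t$. After at most $L+1$ sets we reach $\gcd=1$ whenever $\bigcap_g(\mathcal N_g\setminus\mathcal I)=\emptyset$, which holds by the definition of $\mathcal I$; the assumption $G\ge L+1$ is where this count enters. The delicate points are the degree bookkeeping needed for the Sylvester-type span to hold at each step (which requires $C$ large relative to the remaining gcd degree), and, if that bookkeeping fails for small $C$, falling back to the generic-rank argument above by exhibiting an explicit nonvanishing minor.

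\textbf{Conclusion.} With $\mathcal W=\mathcal V$ generically, \eqref{eq-equations-partial} and \eqref{eq-equations-partial-equivalent} coincide. By Theorem~\ref{th-condition}, feasibility is then equivalent to \eqref{eq-equations-partial-equivalent}, and by statement 1 of Theorem~\ref{th-bound-case1}, \eqref{eq-equations-partial-equivalent} is solvable with $\beta$'s avoiding the $\alpha$'s if and only if $I\ge I^*$. Hence $I\ge I^*$ is generically necessary and sufficient.
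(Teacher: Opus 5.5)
\textbf{The gap is in your Key step.} Your reformulation is correct: \eqref{eq-equations-partial} says $\Phi$ vanishes on $\mathcal W=\sum_g R_g\,\mathbb C[z]_{<C}$, \eqref{eq-equations-partial-equivalent} says $\Phi$ vanishes on $\mathcal V=\mathbb C[z]_{<C+L}$, and $\mathcal W$ depends only on the points outside $\mathcal I$. But the Key step is false as stated.

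Your induction step goes through exactly when $\deg H_t-\deg H_{t+1}\le C$; that is the condition for the Sylvester-type sum to be full. Nothing in the hypotheses gives you this:
\begin{itemize}
\item The claim that each new distinct $R_{t+1}$ lowers $\deg H_t$ is wrong, because distinct $L$-sets can all contain the roots of $H_t$.
\item $G\ge L+1$ plays no role, since the outer parts $\mathcal N_g\setminus\mathcal I$ have empty common intersection for every $G\ge 2$.
\end{itemize}
Concretely, take $C=1$, $L=3$, $G=4$ with outer parts $\{a,b,c\},\{a,b,d\},\{a,b,e\},\{f,g,h\}$. Then $R_0,R_1,R_2\in(z-\beta_a)(z-\beta_b)\,\mathbb C[z]_{\le 1}$, so $\dim\mathcal W\le 3<4=C+L$ for \emph{every} choice of $\beta$. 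The relevant minors vanish identically, so the fallback of exhibiting a nonvanishing minor cannot be carried out.

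\textbf{The paper has the same hole, by a different route.} It writes \eqref{eq-equations-partial} as $\boldsymbol Q\boldsymbol M=\boldsymbol 0$, where $\boldsymbol Q\in\mathbb C^{G\times(L+1)}$ is the coefficient matrix of your $R_g$ (the paper's $Q_g$). It asserts that $\boldsymbol Q$ has full column rank for generic $\beta$ because $G\ge L+1$, and concludes $\boldsymbol M=\boldsymbol 0$. Full column rank means $\operatorname{span}\{R_g\}=\mathbb C[z]_{\le L}$. This is stronger than your $\mathcal W=\mathcal V$, since it ignores the Hankel structure of $\boldsymbol M$. It fails in the same example, so the variety in the paper's footnote is not proper there.

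\textbf{The corollary itself fails for that pattern.} This holds in the sense of genericity in the points outside $\mathcal I$, which is what both arguments actually use; the literal ``almost all $\beta\in\mathbb C^N$'' reading is vacuous, since both systems fail off a null set. Take $K=4$, $d=3$, $S=5$, $N=14$, which gives $C=1$ and $I=6<I^*=7$.
\begin{itemize}
\item The four vectors $(R_g(\alpha_k))_k$ span at most three dimensions of $\mathbb C^4$. So for generic outer points there is a nonzero $\phi$ with $\sum_k\phi_kR_g(\alpha_k)=0$ for all $g$, and generically all $\phi_k\neq0$.
\item Choose a monic degree-$6$ polynomial $P_{\mathcal I}$ with $w_kP_{\mathcal I}(\alpha_k)=\lambda\phi_k$ and distinct roots avoiding all other points. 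For example, take the cubic interpolant plus $\prod_k(z-\alpha_k)$ times a generic monic quadratic.
\item This choice satisfies \eqref{eq-equations-partial}, so the scheme is feasible. Yet \eqref{eq-equations-partial-equivalent} fails, because it would force $\phi=0$.
\end{itemize}

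\textbf{What your argument does prove.} Your induction gives the equivalence for every pattern and every choice of distinct points, with no genericity, whenever the outer parts can be ordered so that the gcd degree drops by at most $C$ per step. In particular this holds whenever $C\ge L$, and then $G\ge L+1$ is not needed. The paper's rank condition instead covers other patterns, such as pairwise disjoint outer parts. Some extra hypothesis of this kind must be added for the statement to hold.
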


\begin{IEEEproof}
The proof is provided in Appendix~\ref{appendix-if-sufficient}.
\end{IEEEproof}
\subsection{Explicit Construction when $I\geq I^*$}
We provide an explicit construction of $\{\beta_n :$ $ n \in [N]\}$ for a $\boldsymbol{\mathcal{N}}$ with $I \ge I^*$, given a fixed generic pairwise distinct $\{\alpha_k:$ $ k \in [K]\}$, such that the resulting CPA scheme is feasible.
The construction adapts Algorithm~1 in~\cite{XJM2026CPA} to the straggler-aware setting by replacing $C$ with $C+L$ and $N$ with $I$.

{\bf Construction of $\{\beta_{n}: n\in [N]\}$:}
 Construct $\boldsymbol{V} \in \mathbb{C}^{(C+L) \times K}$ with $V[j,k]=\alpha_k^j$, $j \in [C+L]$, $k \in [K]$, and $\boldsymbol{A} \in \mathbb{C}^{K \times (I+1)}$ with $A[k,n]=\alpha_k^n$, $n \in [I+1]$, $k \in [K]$.
 Compute $\boldsymbol{U} \triangleq \boldsymbol{V} \operatorname{diag}(\boldsymbol{w}) \boldsymbol{A}$, where $\operatorname{diag}(\boldsymbol{w})$ denotes the diagonal matrix with diagonal $\boldsymbol{w}$.
 Select a non-zero vector $\boldsymbol{c} \in \ker(\boldsymbol{U})$ and define $P_{\text{form}}(z) = \sum_{n=0}^{I} c_n z^n$.
Let $\{\beta_n : n \in \mathcal{I}\}$ be the roots of $P_{\text{form}}(z)$.
For each $g \in [G]$, select distinct values $\{\beta_n : n \in \mathcal{N}_g \setminus \mathcal{I}\}$ from $\mathbb{C}\setminus\{\alpha_k : k \in [K]\}$.
\begin{IEEEproof}
The proof is provided in Appendix~\ref{appendix-if-sufficient}.
\end{IEEEproof}



\section{Simulations}
\label{sec:simulation_method}
In this section, we empirically evaluate the feasibility of CPA as a function of the intersection size $I$.
\subsection{Simulation Setting}
Fixing $K$, $S$, $d$, and $N$, we choose an integer
$1 \leq G_{\max} \leq \binom{N}{N-S}$ and vary
$G \in \{1,2,\ldots,G_{\max}\}$.
For each value of $G$, we uniformly sample $\min\{100, \binom{\binom{N}{N-S}}{G}\}$ distinct instances of the non-straggler pattern $\boldsymbol{\mathcal{N}}$ without replacement.
For each sampled $\boldsymbol{\mathcal{N}}$, we perform the following steps.
Fix  $\{\alpha_k : k \in [K]\}$ as Chebyshev points of the first kind \cite{trefethen2013approximation} on $[-1,1]$, i.e., $\alpha_k = \cos\!\left(\frac{(2k+1)\pi}{2K}\right)$ for $k \in [K]$.
Sample the weight vector $\boldsymbol{w}$ with independent entries,
each drawn uniformly from the interval $(0,1)$.
Numerically test the feasibility of the sampled instance $\boldsymbol{\mathcal{N}}$ by solving for distinct evaluation points
$\{\beta_n : n \in [N]\}$ that satisfy the orthogonality conditions in~\eqref{eq-equations-partial}. 
The approach is nonlinear least squares using
\texttt{scipy.optimize.least\_squares}~\cite{virtanen2020scipy}.
An instance is declared \emph{feasible} if a numerically stable solution satisfying the orthogonality and distinctness conditions is found, and \emph{infeasible} otherwise after $10$ random initializations.

For each intersection size $I$, we quantify the fraction of sampled instances that are numerically feasible.
Specifically, for each $G$, a success rate $p_G(I)$ is defined as the fraction of feasible instances among all sampled instances with intersection size $I$.
The \emph{empirical feasibility} is defined as $p_{\mathrm{eq}}(I) \triangleq \frac{1}{|\mathcal{G}(I)|}
\sum_{G \in \mathcal{G}(I)} p_G(I)$,
where $\mathcal{G}(I)$ denotes the set of values of  $G$  for which at least one sampled instance has intersection size $I$.

\vspace{-0.4\baselineskip}

\subsection{Simulation Results}
We plot the empirical feasibility $p_{\mathrm{eq}}(I) $ as a function of the intersection size
$I$ in Fig.~\ref{fig:feasibility} for both $d=1$ and $d=2$.
From Fig.~\ref{fig:feasibility}, we make the following observations.
For both $d=1$ and $d=2$, the empirical feasibility reaches $100\%$ whenever
$I \ge I^*$. Once the intersection size exceeds the threshold, a set of $\{\beta_n :n\in [N]\}$ satisfying the orthogonality conditions and $\{\beta_n: n\in [N]\} \cap \{\alpha_k: k\in [K]\} =\emptyset$ can be found for the sampled non-straggler pattern, which is consistent with the sufficiency threshold $I^*$ in Theorem~\ref{th-bound-case1}. 
When $I < I^*$, the empirical feasibility drops to $0$ for all cases with
$C \ge 2$, corresponding to $N=5,4$ for $d=1$ and $N=9,8,7$ for $d=2$.
This indicates that no feasible solution is observed among the sampled non-straggler patterns.
When $C=1$,  corresponding to $N=6$ for $d=1$ and $N=10$ for $d=2$, nonzero empirical feasibility is observed when $I<I^*$,
since the number of orthogonality conditions $CG$ becomes comparable to the number of variables $N$, allowing feasible solutions to be found for certain non-straggler patterns.



\appendices

\section{Proof of Theorem~\ref{th-condition}}
\label{sec-appen-1}
We first consider the scalar case, where $\boldsymbol{X}_k$ reduces to a scalar $x_k$, $\boldsymbol{Y}$ reduces to $y$, and $\widehat{\boldsymbol{Y}}(\mathcal{N}_{g})$ reduces to $\widehat{y}(\mathcal{N}_{g})$ for $\mathcal{N}_g \in \boldsymbol{\mathcal{N}}$. The extension to the matrix-valued case follows element-wise, since the polynomial $F(\cdot)$ operates element-wise on the data matrices.

Define the error polynomial $\Delta(z)$ $\triangleq$ $D(z)$ $ - F(E(z))$. From $\deg(D)$ $\le N-S-1$ and $\deg(F(E))$ $\le d(K-1)$, we have 
$\deg(\Delta)$ $\le$ $\max\{\deg(D), \deg(F(E))\}$ $= d(K-1)$.
Then, the recovery error is $\widehat{y}(\mathcal{N}_g) - y$ $=\sum_{k=0}^{K-1}$ $w_k$ $( D(\alpha_k) - F(x_k))$ $=$  $\sum_{k=0}^{K-1}$ $w_k$ $( D(\alpha_k) - F(E(\alpha_k)))$ $= \sum_{k=0}^{K-1} w_k \Delta(\alpha_k)$.

During decoding, $D(\beta_n)=F(E(\beta_n))$ imply $\Delta(\beta_n)$ $=$ $0$ for all $n\in\mathcal{N}_{g}$.
Hence, $\Delta(z)$ admits the factorization $\Delta(z)$ $=$ $\prod_{n\in \mathcal{N}_g}$ $(z-\beta_n)$ $R(z)$ $=$  $P_{g}(z) R(z)$, where $P_g(z) \triangleq$ $\prod_{n \in \mathcal{N}_g} (z-\beta_n)$ and $R(z)$ is a polynomial satisfying $\deg(R) \le$ $d(K-1)- N + S$.
We expand $R(z)=\sum_{j=0}^{\deg(R)} r_j z^j$ with arbitrary coefficients $\{r_j:$ $j \in [\deg(R)+1]\}$. Then,
$\widehat{y}(\mathcal{N}_g)$ $-$ $y = $ $\sum_{k=0}^{K-1} w_k \Delta(\alpha_k) =$ $\sum_{k=0}^{K-1} w_k P_g(\alpha_k)R(\alpha_k)$ $=$ $\sum_{j=0}^{\deg(R)}$ $r_j$ $( \sum_{k=0}^{K-1} w_k P_g(\alpha_k)$ $ \alpha_k^j )$.
Therefore, $\widehat{y}(\mathcal{N}_g)-y=0$ for all admissible choices of $R(z)$ if and only if
$\sum_{k=0}^{K-1} w_k P_g(\alpha_k)\alpha_k^j=0$ for all $j\in[C]$. Enforcing this condition for all $\mathcal{N}_g\in\boldsymbol{\mathcal{N}}$ yields the orthogonality conditions in~\eqref{eq-equations-partial}. The same argument applies element-wise to the matrix-valued case, which completes the proof.

\section{Proof of Lemma \ref{le-sufficient}}
\label{appendix-sufficient}
Define $P_{\mathcal{I}}(z)\triangleq\prod_{n\in\mathcal{I}}(z-\beta_n)$. Since $\mathcal{I}\subseteq\mathcal{N}_g$ for all $g\in[G]$, $P_{\mathcal{I}}(z)$ is a common factor of $P_g(z)$. Thus, we write $P_g(z)=Q_g(z)P_{\mathcal{I}}(z)$, where $Q_g(z)\triangleq\prod_{n\in\mathcal{N}_g\setminus\mathcal{I}}(z-\beta_n)$ satisfies $\deg(Q_g)=N-S-I \triangleq L$.
We expand $Q_g(z)$ $=$ $\sum_{l=0}^{L}$ $q_{g,l}z^{l}$, where $q_{g,l}$ denotes the coefficient in the polynomial $Q_{g}(z)$ and is a function of 
$\{\beta_n: n\in \mathcal{N}_{g} \setminus \mathcal{I}\}$.
Then, each orthogonality condition in \eqref{eq-equations-partial} can be rewritten as
\begin{align}
\sum_{k = 0}^{K-1} w_{k} P_{g}(\alpha_k) \alpha_k^j & = \sum_{k=0}^{K-1} w_k Q_{g}(\alpha_k) P_{\mathcal{I}}(\alpha_k) \alpha_k^j  \notag \\
& =  \sum_{l = 0}^{L} q_{g,l}
 \sum_{k = 0}^{K-1} w_k P_{\mathcal{I}}(\alpha_k) \alpha_{k}^{j+l} = 0, \label{eq-sufficient-I}
\end{align}
 Hence, a sufficient condition for
$\sum_{k = 0}^{K-1}$ $w_{k}$ $P_{g}(\alpha_k)$ $\alpha_k^j $ $= 0$
to hold for all $g \in [G]$ and $j \in [C]$ is that
$\sum_{k = 0}^{K-1}$ $w_k$ $P_{\mathcal{I}}(\alpha_k)$ $\alpha_{k}^{t}$ $= 0$ for $t \in [C+L]$.
The proof of Lemma \ref{le-sufficient} is completed.

\section{Proof Sketch of Theorem~\ref{th-bound-case1}}
\label{sec-sppen-2}
The first statement in Theorem~\ref{th-bound-case1} is equivalent to the following claim.
There exists  $\{\beta_n :$ $n \in \mathcal{I}\}$ satisfying
\begin{align}
&  \sum_{k=0}^{K-1} w_{k}P_{\mathcal{I}}(\alpha_k)\alpha_k^j = 0, j \in [C+L], \label{eq-proof-equations} \\
& \{\alpha_k : k \in [K]\} \cap \{\beta_n : n \in \mathcal{I}\} = \emptyset,
\label{eq-disjoint} \\
&  \beta_n \neq \beta_{n'} \text{ for all } n, n'\in \mathcal{I}  \text{ with }  n \neq n', \label{eq-distinction}
\end{align}
if and only if $I \ge I^*$.
This follows by a direct reduction to the non-straggler setting studied in~\cite{XJM2026CPA}.
Specifically, by Theorem~2 of~\cite{XJM2026CPA} and its proof,
the conditions \eqref{eq-proof-equations}--\eqref{eq-distinction} in our setting have the same algebraic form as
conditions (7)--(9) in Theorem~2 of~\cite{XJM2026CPA}. 
The difference from the non-straggler setting in~\cite{XJM2026CPA} is that we have $C+L$ orthogonality
conditions and a degree-$I$ polynomial $P_{\mathcal I}(z)$. By adapting the proof of Theorem~2
in~\cite{XJM2026CPA}$,$ where $C$ is replaced by $C+L$, the polynomial $P(z)$ is replaced by
$P_{\mathcal I}(z)$, and $N$ is replaced by $I$, it follows that there exist
$\{\beta_n : n \in \mathcal{I}\}$ satisfying \eqref{eq-proof-equations}--\eqref{eq-distinction} 
if and only if $I \ge I^*$. This establishes the first statement.

By Lemma~\ref{le-sufficient}, the set of $\{\beta_n: n \in \mathcal{I}\}$ satisfying conditions \eqref{eq-proof-equations}--\eqref{eq-distinction}
is sufficient to ensure that the orthogonality conditions in Theorem~\ref{th-condition} are satisfied. In addition, $\{\beta_n : n \in \mathcal{N}_g \setminus \mathcal{I}\}$ for $g \in [G]$ can be arbitrarily selected from $\mathbb{C}\setminus\{\alpha_k : k \in [K]\}$, as long as they are distinct. The proof of Theorem~\ref{th-bound-case1} is completed.

\section{Proof of Corollary~\ref{coro-ne-su-if}}
\label{appendix-if-sufficient}
Suppose that $G \ge L+1$.
It suffices to show that if the original orthogonality conditions in~\eqref{eq-equations-partial} hold,
then the reduced conditions in~\eqref{eq-equations-partial-equivalent} also hold.
In~\eqref{eq-sufficient-I}, we represent each orthogonality condition in~\eqref{eq-equations-partial}
by factoring out the common factor $P_{\mathcal{I}}(\alpha_k)$.
Collecting these equations for all $j \in [C]$ and $g \in [G]$, we obtain
$\boldsymbol{Q}\boldsymbol{M}=\boldsymbol{0}$,
where $\boldsymbol{Q}\in\mathbb{C}^{G\times(L+1)}$ has entries
$Q[g,l]=q_{g,l}$ for $g \in [G]$ and $l \in [L+1]$, and
$\boldsymbol{M}\in\mathbb{C}^{(L+1)\times C}$ is defined entry-wise by
$M[l,j]\triangleq \sum_{k=0}^{K-1} w_k\,P_{\mathcal I}(\alpha_k)\,\alpha_k^{j+l}$
for $l\in[L+1]$ and $j\in[C]$.

From $\boldsymbol{Q}\boldsymbol{M}=\boldsymbol{0}$, for each $j\in[C]$, the $j$-th column of
$\boldsymbol{M}$ lies in the null space of $\boldsymbol{Q}$, denoted by
$\operatorname{null}(\boldsymbol{Q})$.
For almost all choices of $\{\beta_n : n \in [N]\}$, i.e., except for a set of Lebesgue measure zero\footnotemark, 
\footnotetext{The set of $\{\beta_n : n \in [N]\}$ for which the matrix $\boldsymbol{Q}$ has column rank strictly less than $L+1$
is described by the zero set of all $(L+1)\times(L+1)$ minors of $\boldsymbol{Q}$.
This set is a proper algebraic variety and therefore has Lebesgue measure zero in $\mathbb{C}^{K}$.}
the matrix $\boldsymbol{Q}$ has full column rank $L+1$. Since $G \ge L+1$, it follows that
$\operatorname{dim}(\operatorname{null}(\boldsymbol{Q}))
= L+1 - \operatorname{rank}(\boldsymbol{Q})
= L+1 - \min(G,L+1)
= 0$.
Hence, the null space of $\boldsymbol{Q}$ is trivial, and therefore
$\boldsymbol{M}=\boldsymbol{0}$.
This implies that
$\sum_{k=0}^{K-1} w_k\,P_{\mathcal I}(\alpha_k)\,\alpha_k^{j+l} = 0$
for all $l\in[L+1]$ and $j\in[C]$.
Equivalently,
$\sum_{k=0}^{K-1} w_k\,P_{\mathcal{I}}(\alpha_k)\,\alpha_k^{j} = 0$
holds for all $j \in [C+L]$, which implies that
\eqref{eq-equations-partial-equivalent} holds.
This establishes the necessity of the condition \eqref{eq-equations-partial-equivalent} in Lemma~\ref{le-sufficient}
and completes the proof.

\section{Proof Sketch of Construction of $\{\beta_n : n \in [N]\}$}
As shown in Appendix~\ref{sec-sppen-2}, the conditions in Lemma~\ref{le-sufficient}
can be reduced to the conditions in Theorem~2 for the non-straggler CPA setting~\cite{XJM2026CPA}.
Since Algorithm~1 in~\cite{XJM2026CPA} is designed to solve Theorem~2 in the non-straggler case,
it can be directly applied to our setting by replacing $C$ with $C+L$ and $N$ with $I$.
As a result, the roots of the resulting polynomial $P_{\text{form}}(z)$ give the desired
$\{\beta_n : n \in \mathcal{I}\}$, which satisfy 
\eqref{eq-equations-partial-equivalent} and
$\{\beta_n : n \in \mathcal{I}\} \cap \{\alpha_k : k \in [K]\} = \emptyset$.
The remaining evaluation points can be selected arbitrarily, as long as they are distinct from
the constructed evaluation points and the given $\{\alpha_k : k \in [K]\}$.

\newpage
\bibliographystyle{IEEEtran}
\bibliography{reference}

\end{document}